\documentclass[runningheads]{llncs}
\hyphenation{op-tical net-works semi-conduc-tor}
\usepackage{algorithm}
\usepackage{algpseudocode}
\usepackage{graphicx}
\usepackage{textgreek}
\usepackage{amsmath}
\usepackage{xspace}
\usepackage{subfigure}
\usepackage{url}
\usepackage{capt-of}
\graphicspath{ {./images/} }

\begin{document}
\title{Differentially Private Generative Adversarial Networks for Time Series, Continuous, and Discrete Open Data}
\titlerunning{dp-GAN for Time Series, Continuous, and Discrete Open Data, IFIP SEC'19}
%
\author{Lorenzo Frigerio\inst{1} \and
Anderson Santana de Oliveira\inst{2} \and
Laurent Gomez\inst{2} \and Patrick Duverger\inst{3}}
\authorrunning{L. Frigerio et al., IFIP SEC'19}
%
\institute{Polytech Nice-Sophia \and
SAP Labs France, Mougins, France\\
 \and
Ville d'Antibes, Antibes, France\\
}
\maketitle              
\begin{abstract}
Open data plays a fundamental role in the 21th century by stimulating economic growth and by enabling more transparent and inclusive societies.
However, it is always difficult to create new high-quality datasets with the required privacy guarantees for many use cases.
This paper aims at creating a framework for releasing new open data while protecting the individuality of the users through a strict definition of privacy called differential privacy.
Unlike previous work, this paper provides a framework for privacy preserving data publishing that can be easily adapted to different use cases, from the generation of time-series to  continuous data, and discrete data; no previous work has focused on the later class. Indeed, many use cases expose discrete data or  at least a combination between categorical and numerical values.
Thanks to the latest developments in deep learning and generative models, it is now possible to model rich-semantic data maintaining both the original distribution of the features and the correlations between them.
The output of this framework is a deep network, namely a generator, able to create new data on demand.
We demonstrate the efficiency of our approach on real datasets from the French public administration and classic benchmark datasets.

\keywords{Deep Learning \and Differential Privacy \and Generative Adversarial Networks \and Open Data \and Categorical Data.}
\end{abstract}

\section{Introduction}

The digital revolution has changed societies and democracies across the globe, making personal data an extremely valuable asset. In such a context, protecting individual privacy is a key need, especially when dealing with sensitive information, such as political preferences\footnote{Cambridge Analytica scandal has generated huge concerns about how safe our personal data are and what we can do to face this problem: \url{https://www.cnbc.com/2018/03/21/facebook-cambridge-analytica-scandal-everything-you-need-to-know.html}}. 
At the same time, the demand for public administration transparency has introduced guidelines and laws in some countries to release open datasets.

Traditionally, two different  settings  can  be  used  in  privacy  protection;  the  interactive  setting  and the non-interactive one. The interactive setting aims at protecting the privacy of the data while making analyses on them. Indeed, whenever the analyst asks for a query on the data the privacy-preserving mechanism acts directly on the result of the query. Although this approach can target individually the type of query requested, it has several drawbacks. In fact, each new query releases an increasing  amount  of  information,  growing  the  privacy  costs.  Therefore,  only a limited number of queries can be performed on the same dataset. Moreover,specific methods must be designed to manage the diversity of possible analysis that can be made on a database. On the contrary, the non-interactive setting can deal with these problems.
To limit personal data breaches, privacy-preserving data publishing techniques can be employed.  This approach aims at adding the noise directly to the data,  not only to the result of a query (like in interactive settings). The result is a completely new dataset, where analysts can perform an infinite number of requests without increasing the privacy costs, nor disclosing private information. Meanwhile, it  is difficult to preserve the utility of the data.

A strong standard privacy guarantee widely accepted by the research community is differential privacy. It ensures that each individual participating in a database does not disclose any additional information by participating in it. Traditionally, many approaches tried to reach differential privacy by adding noise to the data in order to protect personal information \cite{Geng2016,Dwork2008,Kellaris2014}, however they have never been able to provide satisfying results on real semantically-rich data; most of the implementations were limited to very specific purposes such as histogram queries or counting queries \cite{Xiao2011}. 
Generative models represent the most promising approach in this field. Interesting results have been obtained through Generative Adversarial Networks (GANs) \cite{Goodfellow2014}. These models are able to generate new samples coming from a given distribution. The advantage of generative models is that the noise to guarantee privacy is not added directly to the data, causing a significant loss of information, but it is added inside the latent space, reducing the overall information loss, but guaranteeing meanwhile privacy. 

However, using GANs without an appropriate privacy mechanism may have undesired drawbacks. \cite{Shokri2017} proves that although a GAN should generate unseen samples, in practice, many implementations can be subject to membership inference attacks. \cite{Shokri2017} shows that it is possible to perform attacks with a good accuracy, especially when a model suffers from excessive overfitting. As a consequence, the model is not reliable enough to generate samples with good quality and meanwhile it discloses sensitive information of the training set. 

This paper extends the notion of dp-GAN, an anonymized GAN with a differential privacy mechanism, to handle  continuous, categorical and time-series data. It introduces an optimization called clipping decay that improves the overall performances. This new expansion shapes the noise addition during the training. This allows to obtain a better data utility at the same privacy cost.
A set of analysis on real scenarios evidence the flexibility and applicability of our approach, which is supported by an evaluation of the membership inference attack accuracy, proving the positive effects of differential privacy. We provide experimental results on real industrial datasets from the French public administration and over well-known publicly available datasets to allow for benchmarking.

The remainder of the paper is organized as follows: Section~\ref{sec:background} provides the theoretical background for the paper; 
In Section~\ref{sec:framework}, presents our framework for anonymization, together with the mathematical proofs of differential privacy. 
Section~\ref{sec:experiments} provides a set of experiments on diverse use cases to highlight the flexibility and effectiveness of the approach.
Section~\ref{sec:related} discusses related work; and finally Section~\ref{sec:final} concludes the paper. 
\section{Preliminaries}
\label{sec:background}
This section brings some important background for the paper.

\subsection{Generative Adversarial Networks.}

GANs (Generative Adversarial Networks) are one of the most popular type of generative models, being already defined as the most interesting idea in the last 10 years in machine learning\footnote{``GAN  and the variations that are now being proposed is the most interesting idea in the last 10 years in ML, in my opinion'', Yann LeCun.}, moreover, a lot of attention has been given to the development of new variations \cite{Chen2017,Mirza2014,JuefeiXu2017}. Given an initial dataset, a GAN is able to mimic its data distribution, for that, a GAN employs two different networks: a generator and a discriminator. The architecture of the two networks is separate from the definition of GAN; depending on the application, different network configurations can be used. The role of the generator is to map random noise into new data instances capturing the original data distribution. On the opposite side, the discriminator tries to distinguish the generated samples from the real ones estimating the probability that a sample comes from the training data rather than the generator. In this way, after each iteration, the generator becomes better at generating realistic samples, while the discriminator becomes increasingly able to tell apart
 the original data from the generated ones. Since the two networks play against each other, the two losses will not converge to a minimum like in a normal training process but this minmax game has its solution in the Nash equilibrium.
Nevertheless, the vanilla GAN \cite{Goodfellow2014} suffers from several issues that make it hardly usable especially for discrete data. A new definition of the loss function, Wasserstein generative adversarial network (WGAN) \cite{arjovsky2017} and Improved Training of Wasserstein GANs \cite{Gulrajani2017} partially solved this problem. We are going to use this latest loss function to train our dp-GAN models.

\subsection{Differential privacy.} The state of the art anonymization technique is differential privacy. This concept ensures that approximately nothing can be learned about an individual whether she participates or not in a database. Differential Privacy defines a constraint on the processing of data so that the output of two adjacent databases is approximately the same. More formally: 
A randomized algorithm $M$ gives ($\epsilon$,$\delta$)-differential privacy if, for all databases $d$ and $d'$, differing on at most one element and all $S \in Range(M)$, \newline
\begin{equation}
Pr[ M(d) \in S] \leq   exp(  \epsilon ) \times Pr[ M(d') \in S] + \delta . \newline
\end{equation}

This condition encapsulates the crucial notion of indistinguishability of the results of a database manipulation by introducing the so-called privacy budget $\epsilon$. It represents the confidence that a record was involved in a manipulation of the database.  Note that the smaller $\epsilon$ is, the more private the output of the mechanism. According to \cite{Dwork2014} the optimal value of $\delta$ is less than the inverse of any polynomial in the size of the database.
Any function $M$ that satisfies the Differential Privacy condition can be used to generate data that guarantees the privacy of the individuals in the database.
In the non-interactive setting, a mechanism $M$ is a function that maps a dataset in another one. The definition states that the probability of obtaining the same output dataset from $M$ is similar, using either $d$ or $d'$ as input for the mechanism.
Composability is an interesting property of Differential Privacy. If $M$ and $M$' are $\epsilon$ and  $\epsilon'$-differential private respectively, their composition $M \circ M'$ is ($\epsilon$+$\epsilon$')-differentially private \cite{Dwork2008}. This property allows to craft a variety of mechanisms and combinations of such mechanisms to achieve differential privacy in innovative ways.

Concerning deep learning, Abadi et al.~\cite{Abadi2016} developed a method to train a deep learning network involving differential privacy. This method requires the addition of a random noise, drawn from a normal distribution, to the computed gradients, to obfuscate the influence that an input data can have on the final model. 
%
%

As for any anonymization methods, one must assess the likelihood of membership inference attacks. This kind of attack evaluates how much a model behaves differently when an input sample is part of the training set rather than the validation set. Given a machine learning model, a membership inference attack uses the trained model to determine if a record was part of the training set or not.  In the case of generative models such as the one of GAN, a high attack accuracy means that the network has been able to model only the probability distribution of the training set and not the one of the entire population. This kind of attack has been proven to be effective especially when overfitting is relevant \cite{Shokri2017}.
%
%

\subsection{Deep Learning with differential privacy}
Abadi et al.\cite{Abadi2016} developed a method to train a deep learning network in a differentially private manner. This method requires the addition of a random noise to the computed gradients in order to obfuscate the influence that an input data can have on the final model. The training process can be defined as follow:

\begin{itemize}
\item A set of input, called lot, is processed by the model and the loss is computed. 
\item The array composed of the gradients for each weight is calculated starting from the loss.
\item The gradients are clipped, returning the minimum value between the norm of the gradients and an upper bound defined as an hyperparameter of the model.
\item A noise coming from a normal distribution is added to the clipped gradients with a variance proportional to the upper bound used before. 
\item The result is a sanitized version of the gradients in which the influence of the input data is bounded, guaranteeing privacy. These new gradients will be used to train the model updating the weights.
\end{itemize}

\subsection{Membership inference attacks}
A powerful attack that affects most of the machine learning algorithms is represented by membership inference attack. This kind of attack evaluates how much a model behaves differently when an input sample is part of the training set rather than the validation set. Given a machine learning model, a membership inference attack uses the trained model to determine if a record was part of the training set or not. Consequently, it analyses if the model is putting in danger the privacy of the data present in the training set. In the case of generative models such as the one of GAN, an high attack accuracy means that the network has been able to model only the probability distribution of the training set and not the one of the entire population. This kind of attack has been proven to be effective especially when overfitting is relevant \cite{Shokri2017}.
When we have the trained model at our disposal, it is possible to apply membership inference attacks through white box attacks. In the case of GAN this process is simple but at the same time effective. In practice, a new dataset made by the combination of training and validation samples is passed to the discriminator network. The discriminator outputs for each sample a value that represents the probability that the data was part of the training set. If the model has correctly performed generalization, the discriminator should not be able to distinguish the training samples from the validation ones because they come from the same distribution. On the contrary, if the training samples have higher probabilities, it means that the model has not learned the distribution of the training set, but it has only learned to reproduce those specific samples without generalization.

\section{Problem Statement}
Database records in ERP or CRM systems are mostly categorical and discrete data, while several IoT applications produce time series data. In the context of the growing need for public administrations to release open dataset for increased transparency, the open questions the current paper addresses are: 
\begin{itemize}
\item how to provide a differentially private GAN framework flexible enough to handle not only continuous dataset, but also time series and discrete data?
\item What are the metrics to employ to search for a stable state providing suitable values for $\epsilon$ and $\delta$?
\end{itemize}
The targeted use case is the release of a dataset in an anonymized manner so that can safely be used as open data. Governments have the necessity to increase transparency through the publication of new open datasets. However, even more severely after the introduction of new global data protection regulations, the rules are becoming stricter. Anonymization is considered a suitable way to guarantee the privacy of the users participating in a dataset and therefore favor the creation of new open data. 

\section{The framework}
\label{sec:framework}

In our framework proposed we assume a trusted curator interested in releasing a new open dataset with privacy guarantees to the users present in it. Outside the trusted boundary, an analyst can use the generator model, result of our algorithm, to perform an indefinite number of queries over the data the generator produces. Such outputs can be eventually released as open data. Even by combining the generated data with other external information, without ever having access to the original training data, the analyst would not be able to violate the privacy of the information, thanks to the mathematical properties of differential privacy.

The dp-GAN model is constituted by two networks, a generator and a discriminator, that can be modelled based on the application domain. We adopted Long Short Term Memories (LSTM) inside the generator to model streaming data and multilayer perceptron (MLP) to model discrete data. In addition, to manage discrete data, we also used a trick that does not influence the training algorithm but it changes the architecture of the generator network. Specifically, an output is created for each possible value that a variable can assume and a softmax layer is added for each variable. The result of the softmax layer becomes the input of the discriminator network. Indeed, each output represents the probability of each variable instance; the discriminator compares these probabilities with the one-hot encoding of the real dataset. On the contrary, the output nodes associated with continuous variables are kept unchanged.

At the end of the training, the generator network can be publicly released; in this way, the analyst can generate new datasets as needed. Moreover, since the generator only maps noise into new data the process is really fast and data can be generated on the fly when a new analysis is required.

We used the differentially private Stochastic Gradient Descent (dp-SGD) proposed by \cite{Abadi2016} to train the discriminator network and the Adam optimizer to train the generator. The dp-GAN implementation relies on a traditional training in which the gradients computed for the discriminator are altered. This due to the fact that we want to limit the influence that each sample has on the model. On the contrary, the training of the generator remains unaltered; indeed, this network bases its training only on the loss of the discriminator without accessing directly the data.

The approach is independent from the chosen loss and therefore can be applied to the vanilla GAN implementation~\cite{Goodfellow2014} but also to the improved WGAN one. The dp-SGD works as follows: once the gradients are calculated, it clips them by a threshold $C$ and alter them by the addition of a random noise with variance proportional to $C$. Each time an iteration is performed, the privacy  cost increases and the objective is to find a good balance between data utility and privacy costs. 

Our implementation is an extension to the improved WGAN framework combining it with the dp-SGD. Therefore, the loss functions are calculated as in a normal WGAN implementation, except that the computed gradients are altered to guarantee privacy. Moreover, for the first time up to our knowledge, the dp-GAN concept is adapted to handle discrete data. Algorithm~\ref{algo} describes our training procedure.
\begin{algorithm}
  \caption{Algorithm for training a GAN in a differentially private manner \label{algo}}

  \begin{algorithmic}
    \Statex  { \textbf{Input}: Samples from $x_1$ to $x_N$, group size $L$, number of samples $N$, clipping parameter $C$, noise scale \textsigma, privacy target \textepsilon, number of iterations of the discriminator per each iteration of the generator $Ndisc$, batch size $b$, Wasserstein distance $\mathcal{L}$, learning rate \texteta,  number of discriminator' s parameters $m$, clipping decay $C_{decay}$. 
\newline \textbf{Output}: differentially private Generator G
}
    \State Initialize weights randomly both for the Generator $\theta_{G(0)}$ and the discriminator $\theta_{D(0)}$
    \State Convert discrete variables into their One-Hot encodings
    \While{ (While privacy cost $\le$ \textepsilon ) } 
    \For {$t=0$ to $Ndisc$}
    \For {$j=0$ to $b$}    
    \State  sample $L\textsubscript{t}$ with sample probability $L/N=q$
    \State For each  $x_i$ in $L\textsubscript{t}$, compute $g\textsubscript{t}(x\textsubscript{i}) \gets \nabla_\theta \mathcal{L}(\theta\textsubscript{t},x\textsubscript{i})$    
    \Comment{Compute gradient}  
    \State $g\textsubscript{t}(x\textsubscript{i}) \gets g\textsubscript{t}(x\textsubscript{i})/max(1, \left\lVert g\textsubscript{t}(x\textsubscript{i})\right\rVert / C)$
    \Comment{Clip gradient} 
    \State $g\textsubscript{t} \gets \frac{1}{L} (\sum\limits_{i=0}^L  g\textsubscript{t}(x\textsubscript{i}) +N(0,(\sigma * C)^2 I)) $
    \Comment{Add noise} 
    \State $\theta\textsubscript{D(t+1)}\gets \theta\textsubscript{D(t)} - \eta * g\textsubscript{t}$
    \Comment{Gradient descent}
    \EndFor
    \EndFor
    \State $C$ *=  $C_{decay}$
    \Comment{Clipping decay}       
    \State Update the overall privacy cost \textepsilon     
    \Comment{Moment accountant} 
	\State Sample m values {z\textsubscript{i}} $\sim$ Random noise
    \Comment{Sample random noise}  
    \State $\theta_{G(t+1)} \gets  Adam(\nabla_\theta \frac{1}{m} \sum\limits_{i=0}^m -D(G(z\textsubscript{i}))) $
    \Comment{Update Generator}  
    \EndWhile 
    \State {\bf return} G
  \end{algorithmic}
\end{algorithm}
\subsection{Clipping decay}
The role of the clipping parameter is to limit the influence that a single sample can have on the computed gradients and, consequently, on the model. Indeed, this parameter does not influence the amount of privacy used. A big clipping parameter allows big gradients to be preserved at the cost of a noise addition with a proportionally high variance. On the contrary, a small clipping parameter limits the range of values of the gradients, but it keeps the variance of the noise small. The bigger the clipping parameter the bigger the gradients' variance. Similarly to what it is done with the learning rate, it is possible to introduce a clipping parameter decay. In this way, the gradients not only tend to descend over time to better reach a minimum but, in addition, they mimic the descending trend of the gradients allowing to clip the correct amount at each step. In fact, when the model tends to converge to the solution, the gradients decrease. Therefore, the noise may hide the gradients if its variance is kept constant. By reducing the clipping parameter over time, it is possible to reduce the variance in the noise in parallel with the decrease of the gradients, thus improving the convergence of the model. This without influencing the overall privacy costs that are not altered by the clipping parameter but only by the amount of noise added.

\subsection{Moment accountant}
A key component of the dp-GAN is the moment accountant. It is a method that allows to compute the overall privacy costs by calculating the cost of a single iteration of the algorithm and cumulating it with the other iterations. Indeed, thanks to the composability property of differential privacy it is possible to cumulate the privacy costs of each step to compute the overall privacy cost. 
Given a correct value of $\sigma$ and thanks to  weights clipping and the addition of noise, Algorithm~\ref{algo} is $(O(\epsilon,\delta))$-DP with respect to the lot. Since each lot is sampled with probability $q = L/N$, each iteration is $(O(q\epsilon,q\delta))$-DP. In the formula, q represents the sampling probability (the number of samples inside a lot divided by the total number of samples present in the dataset). The clipping decay optimization has no influence on the moment accountant. Indeed, it alterates only the clipping parameter and not the variance of the noise that is the variable that influences the cost of a single iteration by changing the value of $\epsilon$.
Each time a new iteration is performed the privacy costs increase. However, thanks to the definition of moment accountant, these costs do not increase linearly.
Indeed, by cumulating the privacy costs for each iteration, an overall level of $(O(q\epsilon \sqrt{T}),\delta)$-DP is achieved where $T$ represents the number of steps (the number of epochs divided by $q$).

The moment accountant is based on the assumption that the composition of Gaussian mechanisms is being used. Assessing that a mechanism $M$ is $(O(\epsilon,\delta))$-DP is equivalent to a certain tail bound on $M$'s privacy loss random variable. The moment accountant keeps track of a bound on the moments of the privacy loss random variable defined as: \newline
\begin{equation}
c(o;M,aux,d,d') = log \frac{Pr[M(aux,d) = o]}{Pr[M(aux,d') = o]}\newline \label{lossrandomvariable}
\end{equation}
In (\ref{lossrandomvariable}) $d$ and $d$' represent two neighbouring  databases, $M$ the mechanism used, $aux$ an auxiliary input and $o$ an outcome. What we are computing are the log moments of the privacy loss random variable that can be cumulated linearly. 
In order to bound this variable, since the approach is the sequential application of the same privacy mechanism we can define the $\lambda \textsuperscript{th}$ moment $\alpha M(\lambda, aux, d, d')$ as the log of the moment generating function evaluated at the value \textlambda : 
\begin{equation}
 M(\lambda,aux,d,d') = log E\textsubscript{o $\sim$ M(aux,d)} [exp(\lambda c(o,M,aux,d,d'))].
\end{equation}
And consequently we can bind all possible $\alpha M(\lambda,aux,d,d')$. We define \begin{equation}\label{eq:mech}
\alpha M(\lambda) = max_{aux,d,d'} \alpha M(\lambda,aux,d,d') 
\end{equation}
\begin{theorem} \label{th1}
Using the definition \eqref{eq:mech} then $\alpha M(\lambda)$ has the following characteristics:
given a set of $k$ consecutive mechanisms, for each $\lambda$: \[ \alpha_M(\lambda) \le \sum\limits_{i=1}^k \alpha M_i(\lambda) \]
for any $\epsilon > 0$, a mechanism $M$ is $(\epsilon,\delta)$-differentially private for
\[ \delta = min_\lambda exp(\alpha_M(\lambda) - \lambda * \epsilon) \]
\end{theorem}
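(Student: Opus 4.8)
The plan is to establish the statement in two independent pieces: the additive composition bound on the log-moments, and the conversion of a moment bound into an $(\epsilon,\delta)$ guarantee via a tail argument. Both follow from manipulating the privacy loss random variable $c(o;M,aux,d,d')$ defined in \eqref{lossrandomvariable} together with its moment generating function $\alpha_M(\lambda,aux,d,d')$.

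For the composition inequality, I would model the composite mechanism as an adaptive sequence $M = (M_1,\dots,M_k)$, where each $M_i$ may take the outputs $o_1,\dots,o_{i-1}$ of its predecessors as auxiliary input. The key structural observation is that on an output tuple $o=(o_1,\dots,o_k)$ the joint density factors, so the privacy loss telescopes into a sum:
\[ c(o;M,d,d') = \sum_{i=1}^{k} c(o_i;M_i,(o_1,\dots,o_{i-1}),d,d'). \]
Exponentiating and taking the expectation over $o\sim M(d)$ turns the sum into a product of exponentials. I would then peel off the outermost factor and integrate over $o_k$ first, conditioning on $o_1,\dots,o_{k-1}$: by the tower property of conditional expectation the inner integral is exactly $\exp(\alpha_{M_k}(\lambda,(o_1,\dots,o_{k-1}),d,d'))$, which by the definition \eqref{eq:mech} is at most $\exp(\alpha_{M_k}(\lambda))$ uniformly in the conditioning. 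Pulling this constant factor out and iterating the argument over $i=k,k-1,\dots,1$ yields $E[\exp(\lambda c)] \le \prod_{i=1}^k \exp(\alpha_{M_i}(\lambda))$. Taking logarithms gives $\alpha_M(\lambda,d,d')\le\sum_i \alpha_{M_i}(\lambda)$, and maximizing the left side over $aux,d,d'$ delivers the claimed bound.

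For the tail bound I would first recall the standard equivalence between $(\epsilon,\delta)$-DP and a bound on the upper tail of the privacy loss: writing $Z = c(o;M,aux,d,d')$ with $o\sim M(d)$, it suffices to show $\Pr[Z\ge\epsilon]\le\delta$ for every pair of neighbours. This implication is proved by splitting any event $S\subseteq Range(M)$ according to whether $Z\le\epsilon$ or $Z>\epsilon$: on the first part the density ratio is bounded by $e^\epsilon$, contributing $e^\epsilon\Pr[M(d')\in S]$, while the second part is bounded by $\Pr[Z>\epsilon]\le\delta$. To produce such a tail bound from the moment, I would apply a Chernoff-style argument: for any $\lambda>0$, Markov's inequality applied to $\exp(\lambda Z)$ gives
\[ \Pr[Z\ge\epsilon] \le \frac{E[\exp(\lambda Z)]}{\exp(\lambda\epsilon)} = \exp\bigl(\alpha_M(\lambda,aux,d,d') - \lambda\epsilon\bigr) \le \exp\bigl(\alpha_M(\lambda)-\lambda\epsilon\bigr). \]
Since this holds for every $\lambda$, optimizing the right-hand side yields $\delta=\min_\lambda \exp(\alpha_M(\lambda)-\lambda\epsilon)$, as claimed.

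The main obstacle I anticipate is the composition step rather than the tail bound: making the adaptive factorization rigorous requires care about the auxiliary input, because each $M_i$'s moment must be controlled uniformly over all possible histories $(o_1,\dots,o_{i-1})$ it might receive. This is precisely why the definition \eqref{eq:mech} takes a maximum over $aux$; the uniform bound is what lets the tower-property peeling go through one layer at a time without the conditioning reintroducing dependence on $d,d'$. The tail bound, by contrast, is a routine Chernoff estimate once the equivalence with the privacy-loss tail is in hand.
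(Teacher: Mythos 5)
Your proof is correct and takes exactly the approach the paper relies on: the paper's own ``proof'' of Theorem~\ref{th1} is a one-line deferral to \cite{Abadi2016}, and your telescoping/tower-property argument for composition plus the Chernoff-style tail bound is precisely the argument given in that reference. You have simply filled in the details the paper omits, so there is nothing substantive to compare.
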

\begin{proof}[Proof of Theorem \ref{th1}]
A detailed proof of Theorem 1 can be found in \cite{Abadi2016}.
\end{proof}
\begin{theorem} \label{th2}
 Algorithm~\ref{algo} is $(O(q\epsilon \sqrt{T}),\delta)$-differentially private for appropriately\footnote{The appropriate values for the noise scale and for the threshold will depend on the desired privacy cost and on the size of the dataset.} chosen settings of the noise scale and the clipping threshold.
\end{theorem}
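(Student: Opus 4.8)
The plan is to reduce Theorem~\ref{th2} to the moment-accountant machinery of Theorem~\ref{th1} by identifying each discriminator update with a subsampled Gaussian mechanism and then composing over the $T$ iterations. First I would establish that a single discriminator step is an instance of the Gaussian mechanism with bounded sensitivity. The clipping operation $g_t(x_i) \gets g_t(x_i)/\max(1, \lVert g_t(x_i)\rVert / C)$ guarantees $\lVert g_t(x_i)\rVert \le C$ for every sample, so the summed gradient $\sum_i g_t(x_i)$ has $\ell_2$-sensitivity at most $C$: replacing one record in the lot changes the sum by a vector of norm at most $C$. The subsequent perturbation by $N(0,(\sigma C)^2 I)$ is therefore the standard Gaussian mechanism with noise multiplier $\sigma$, which is $(O(\epsilon),\delta)$-DP with respect to the lot.

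Next I would account for amplification by subsampling and bound the log-moments. Since each lot $L_t$ is drawn with probability $q = L/N$, an individual record participates in any given update only with probability $q$, so the per-iteration guarantee improves to $(O(q\epsilon), q\delta)$-DP, matching the claim in the Moment accountant subsection. Following \cite{Abadi2016}, for this subsampled Gaussian mechanism the $\lambda$-th moment $\alpha_{M_i}(\lambda)$ of \eqref{eq:mech} admits a bound whose dominant term scales as $q^2\lambda(\lambda+1)/\sigma^2$, i.e.\ as $q^2\lambda^2/\sigma^2$. This quadratic dependence on $q$ is the crux, as it is what eventually produces the $\sqrt{T}$ factor in contrast to the linear growth given by naive composition.

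Finally I would compose and convert to an $(\epsilon,\delta)$ statement. By the first inequality of Theorem~\ref{th1}, the $T$ identical updates compose additively, $\alpha_M(\lambda) \le \sum_{i=1}^{T}\alpha_{M_i}(\lambda) \le c\,Tq^2\lambda^2/\sigma^2$ for a constant $c$. By the second part of Theorem~\ref{th1}, the mechanism is $(\epsilon,\delta)$-DP whenever $\delta \ge \exp(\alpha_M(\lambda) - \lambda\epsilon)$ for some $\lambda$; substituting the moment bound and minimizing over $\lambda$ gives $\lambda^\star \asymp \epsilon\sigma^2/(Tq^2)$ and, after solving for the noise multiplier, shows that taking $\sigma = \Theta\bigl(q\sqrt{T\log(1/\delta)}/\epsilon_{\mathrm{tot}}\bigr)$ suffices to guarantee $(\epsilon_{\mathrm{tot}},\delta)$-DP with $\epsilon_{\mathrm{tot}} = O(q\epsilon\sqrt{T})$. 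The clipping-decay optimization only rescales $C$, which cancels in the noise multiplier $\sigma$ and hence leaves every moment bound untouched, so it does not affect the accounting.

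I expect the main obstacle to be the log-moment bound for the subsampled Gaussian mechanism. One must control the moment generating function $E_{o \sim M(d)}[\exp(\lambda\, c(o))]$ when the output is a mixture arising from the $q$ versus $1-q$ subsampling event; the delicate point is showing that the leading term is $O(q^2\lambda^2/\sigma^2)$ rather than $O(q\lambda/\sigma)$, since precisely this cancellation delivers the sublinear $\sqrt{T}$ privacy growth. This is exactly the content of the technical lemma of \cite{Abadi2016}, on which I would rely, so the present proof amounts to assembling that bound with the sensitivity analysis and the composition statement of Theorem~\ref{th1}.
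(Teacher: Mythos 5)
Your proposal is correct and takes essentially the same route as the paper's proof: both reduce to Theorem~\ref{th1}, bound the per-step log-moment of the subsampled Gaussian mechanism via the Gaussian mixture $\mu = (1-q)\mu_0 + q\mu_1$ to obtain $\alpha(\lambda) \le q^2\lambda(\lambda+1)/((1-q)\sigma^2) + O(q^3/\sigma^3)$ (deferring that technical lemma to \cite{Abadi2016}), and then compose additively over the $T$ steps before applying the tail bound. You additionally make explicit the sensitivity argument coming from gradient clipping, the optimization over $\lambda$, and the reason clipping decay does not affect the accounting --- details the paper leaves implicit --- but the underlying argument is the same.
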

\begin{proof}[Proof of Theorem \ref{th2}]
By Theorem~\ref{th1}, it suffices to compute, or bound, $\alpha M_i(\lambda)$ at each step and sum them to bound the moments of the mechanism overall. Then, starting from the tail bound we can come back to the  $(\epsilon, \delta)$-differential privacy guarantee.
The last challenge missing is to bind the values $\alpha M_t(\lambda)$ for every single step. Let $\mu_0$ denote the Probability Density Function (PDF) of $N(0,\sigma^2)$, and $\mu_1$ denote the PDF of $N(1,\sigma^2)$. Let $\mu$ be the mixture of two Gaussians $\mu = (1 - q)\mu_0 + q\mu_1$. Then we need to compute $\alpha(\lambda) = log (max(E_1, E_2))$ where
\begin{equation}
E_1 = E\textsubscript{z}  [( \mu\textsubscript{0}(z) / \mu(z))^\lambda]
\end{equation}
\begin{equation}
E_2 = E\textsubscript{z}  [( \mu(z) / \mu\textsubscript{0}(z))^\lambda]
\end{equation}
\newline In the implementation of the moment accountant, we carry out numerical integration to compute \textalpha(\textlambda). In addition, we can show the asymptotic bound \[\alpha(\lambda) \le q^2 \lambda(\lambda + 1)/(1 - q)\sigma^2 + O(q^3 /\sigma^3 )\]
\newline
This inequation together with Theorem~\ref{th1} implies Theorem~\ref{th2}.
\qed
\end{proof}

\section{Discrete Scenario}
In this paper we target also discrete data, so, we extended the dp-GAN implementation to manage datasets in which both continuous and categorical data are present. The trick used, does not influence the training algorithm while it changes the architecture of the generator network. Specifically, an output is created for each possible value that a variable can assume and a softmax layer is added for every variable. The result of the softmax layer is the input for the discriminator network. Indeed, each output constitutes the probability of each variable instance; these probabilities are compared by the discriminator with the one-hot encoding of the real dataset. On the contrary, continuous variables are kept unchanged. After training the network, the new dataset is generated by sampling a value based on the probabilities coming from the softmax layer, for each variable. Differential privacy influences these probabilites by introducing randomness: this prevents outliers from influencing analyses carried out on the anonymized dataset too much.

\subsection{Improvements}
The role of the clipping parameter is to limit the influence that a single sample can have on the computed gradients and, consequently, on the model. Indeed, this parameter does not influence the amount of privacy used. A big clipping parameter allows big gradients to be preserved at the cost of a noise addition with a proportionally high variance. On the contrary, a small clipping parameter limits the range of values of the gradients, but it keeps the variance of the noise small. The bigger the clipping parameter the bigger the gradients' variance. The clipping parameter acts in parallel to the learning rate and consequently, these two hyperparameters should be tuned together. Similarly to what it is done with the learning rate, it is possible to introduce a clipping parameter decay. In this way, the gradients not only tend to descend over time to better reach a minimum, but, in addition, they mimic the descending trend of the gradients allowing to clip the correct amount at each step. in fact, when the model tends to converge to the solution, the gradients decrease. Therefore, the noise may hide the gradients if its variance is kept constant. By reducing the clipping parameter over time, it is possible to reduce the variance in the noise in parallel with the decrease of the gradients, thus improving the convergence of the model. This without influencing the overall privacy costs that are not altered by the clipping parameter but only by the amount of noise added. This intuition is experimentally proved in the experiments Section.

\section{Experiments}
\label{sec:experiments}
In this section, we evaluate empirically our framework. The experiments are designed to assess the quality of the generated data, measure the privacy of the generated models and understand how differential privacy influences the output dataset. Moreover, we evaluate the solidity of the different models against membership inference attacks. 
Since it is notoriously arduous to assess the results of a GAN, we decided to combine qualitative  and quantitative analysis to obtain a reliable evaluation. 
Qualitative analysis allows us to graphically verify the quality of the results and to observe the effects of differential privacy; while quantitative analysis provides a most accurate evaluation of the results; in particular, we measured some distance metrics to compare the generated data with the real data. Finally, our process included evaluating  our model on a classification problem. This highlights the high utility of the data even when anonymization is used. 
For all experiments, when differential privacy is used, $\delta$ is supposed to be less than $10^{-5}$, a value that is generally considered 
 safe \cite{Abadi2016} because it implies that the definition of differential privacy is true with a probability of $99.999 \%$. Indeed $\delta$ is the probability that a single record is safe and not spoiled. We kept the value of $\delta$ fixed to be able to evaluate the privacy of a mechanism with a single value $\epsilon$ that summarizes in a clearer manner the privacy guarantees.

In the different settings we applied only minor changes to the dp-GAN architecture, since we proved that it adapts well to each of them. In particular, in every case the discriminator is composed of a deep fully connected network. On the other hand, the architecture of the generator is adapted to the different datasets used.
To generate time-series we used an LSTM which output becomes the input of the discriminator. On the contrary, in the case of discrete datasets we used a fully connected network which outputs the probability distribution for each value that a variable can assume.
The interested reader can find an exhaustive explanation of the experiments, including additional datasets in the following GIT repository: \url{https://github.com/Lory94/dp-GAN}.

\subsection{Synthetic dataset}
In order to provide a first evaluation of the performances of the dp-GAN and understand the effects of differential privacy, we conducted a first experiment on a synthetic dataset. The dataset is constituted by samples coming from six 2D-gaussian distributions with the same variance, but with different centers. 
The quality of the results using dp-GAN is similar in both marginals and joint distributions. 

\begin{figure*}[!hb]
    \centering
    \minipage{0.32\textwidth}
        \centering
        \begin{subfigure}[]{}
           \includegraphics[width=\linewidth]{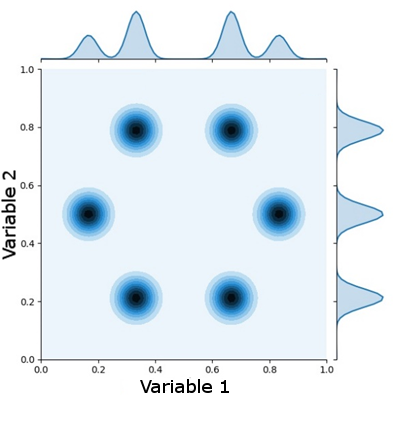}
        \end{subfigure}%
    \endminipage
    ~ 
    \minipage{0.32\textwidth}
        \centering
        \begin{subfigure}[]{}            \includegraphics[width=\linewidth]{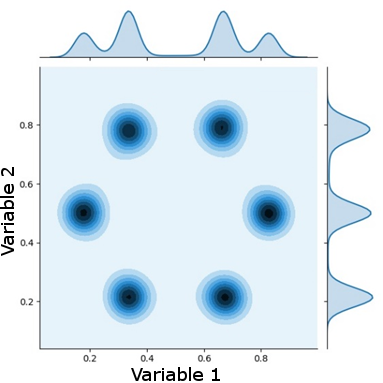}
        \end{subfigure}
   \endminipage
    ~
    \minipage{0.32\textwidth}
        \centering
        \begin{subfigure}[]{}            \includegraphics[width=\linewidth]{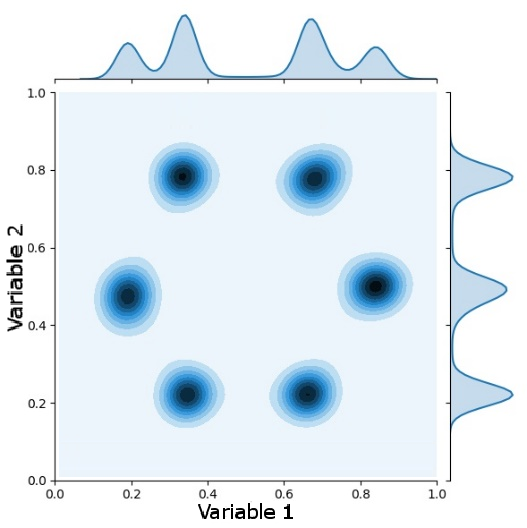}
       \end{subfigure}
    \endminipage
    \caption{Kernel estimation for: (a) the original points, (b) WGAN, and  (c) dp-GAN }
    \label{fig:kernels}
\end{figure*}

Fig.~\ref{fig:kernels}  plots the kernel density estimation of the data to visualize the bivariate distribution. As expected, differential privacy introduces a small noise in the results, thus increasing the variance of the six gaussian distributions, while at the same time replicating faithfully the original distribution.


We use the Wasserstein distance to measure the distance between two distributions, to ascertain the quality of the GAN models. 
Fig.~\ref{fig:synthetic_was} plots the distance values for the non-anonymized GAN, a dp-GAN,  and a dp-GAN using clipping decay. Both the dp-GAN models have $\epsilon = 8$. The different measures tend to converge to similar results, especially when clipping decay is applied, demonstrating the high quality of the results. Indeed, clipping decay allows the Wasserstein distance to drop to values comparable to those of the non-anonymized version in the second half of the graph. The main difference resides in the higher number of epochs necessary to reach the convergence, due to the noise addition.



\subsection{Time-series data}
To test our implementation on a real dataset we decided to use a set of data coming from the IoT system of the City of Antibes, in France. This dataset is private because it contains sensitive information about the water consumption and water pipeline maintenance, obtained directly from sensors in each neighborhood. The purpose is to support public administration in releasing highly relevant open data, while hiding specific events in the time-series, and preventing individual re-identification.  With minor adjustments, the solution can constitute a valid framework applicable to other purposes, such as electricity consumption and waste management. 

The dataset is an extract of one month of measurements, where each sample is a time-series containing 96 values (one every 15 minutes). Each sample is labelled with the name of the neighborhood. The data has been normalized before the training. The  goal is to generate a new time-series that contains the same number of records and the same distribution as the original dataset, while providing differential privacy guarantees, that is, each sample does not influence whatever analyses more than a certain threshold. In this way, anomalous situations such as maintenance works, a failure in a water pipe or an unexpected water usage by a person living in a certain area are protected and kept private. 

\begin{figure*}[!htb]
    \centering
    \minipage{0.33\textwidth}
        \centering
            \includegraphics[width=\linewidth]{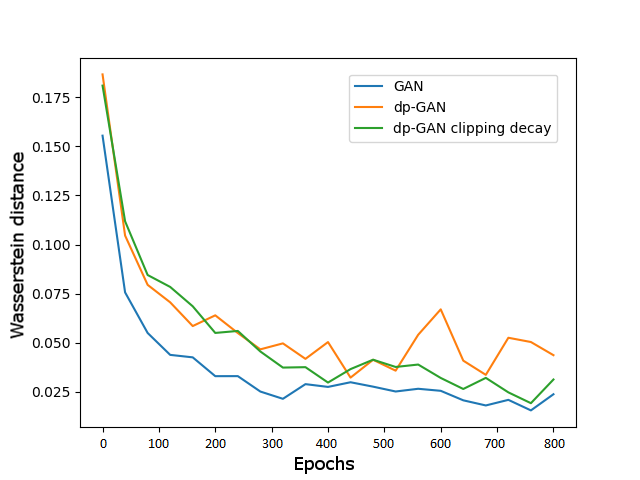}\caption{ Wasserstein distance using anonymized and non anonymized GANs
    }\label{fig:synthetic_was}
        
    \endminipage \hfill
    \centering
    \minipage{0.66\textwidth}
    \minipage{0.5\textwidth}
        \centering
        \begin{subfigure}[]{}
            \includegraphics[width=\linewidth]{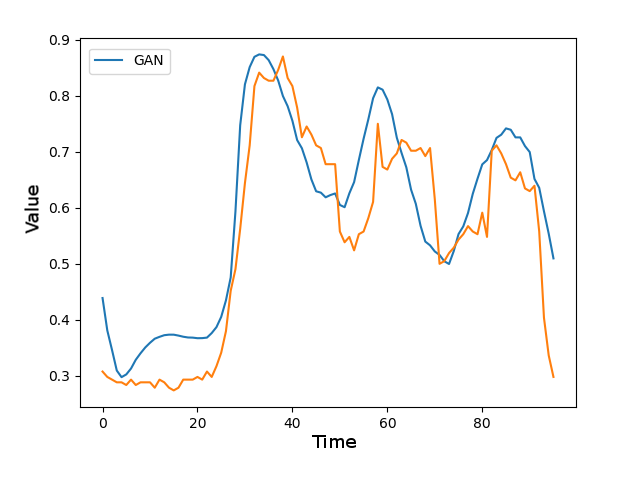}
        \end{subfigure}%
    \endminipage \hfill
   \minipage{0.5\textwidth}
        \centering
        \begin{subfigure}[]{}
          \includegraphics[width=\linewidth]{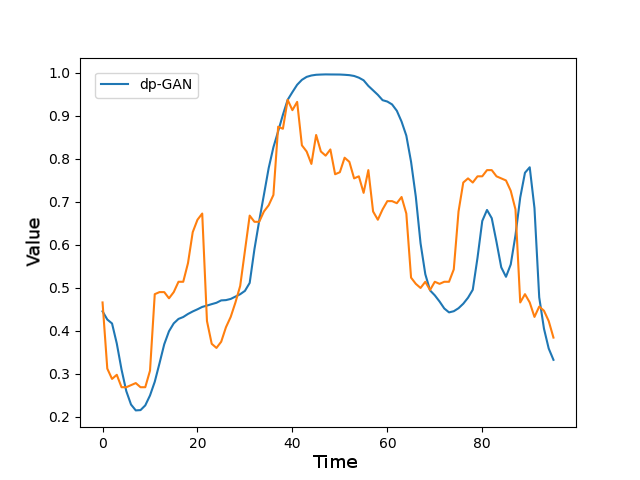}
       \end{subfigure}
    \endminipage \hfill \caption{Generated sample from a non-anonymized GAN (a) and dp-GAN (b), in blue. In orange, the closest sample present in the dataset, in terms of dynamic time warping. }
    \label{fig:antibes}
    \endminipage \hfill
   
\end{figure*}

Fig.~\ref{fig:antibes} compares real samples and generated ones, using non-anonymized-GAN and a dp-GAN with $\epsilon = 6$. We plot the sensor values for  a generated sample and the closest sample coming from the original data, in terms of the dynamic time warping distance. The distribution of the original time series is kept, but in the dp-GAN samples, the curves tend to be smoother, hiding some of the variability of the original data. 

For time-series data, the quality assessment for GANs represents a challenge. While for images the inception score \cite{luvcic2017gans} has become the standard measure to evaluate the performance of a GAN, there is no counterpart for the assessment of time series. We believe that this represents an interesting area of research for the future.


\subsection{Discrete data}
We analyzed the performances of our model on the UCI adult dataset: this dataset is an extract of the US census and contains information about working adults, distributed across $14$ features. A classification task on it is a reliable benchmark, because of its widespread use in several studies.
Records are classified depending on whether the individual earns more or less than $50k$ dollars each year. 

To use accuracy as an evaluation metric, we decided to sample the training and test data in such a way that both classes would be balanced. 
We built a random forest classifier on the dataset generated by the dp-GAN. We evaluated the accuracy on the test set and compared it with the one of the model built on the real non-anonymized dataset. 
If the dp-GAN model behaves correctly, all the correlations between the different features should be preserved. Therefore, the final accuracy should be similar to what was achieved by using the real training set. We also tracked the privacy costs to verify that the generated data were correctly anonymized. Finally, we examined how much membership inference attacks can influence our model and compared it to a non-anonymized GAN model.

\begin{table}[ht]
\minipage{0.49\textwidth}
\centering
\begin{tabular}{ l| c|c }
  \hline			
  Method & Epsilon & Accuracy \\
  Real dataset & Infinite & 77.2 \% \\
  GAN & Infinite & 76.7 \% \\
  dp-GAN & 3 & 73.7 \% \\
  dp-GAN clipping decay & 3 & 75.3 \% \\
  dp-GAN & 7 & 75.0 \% \\
  dp-GAN clipping decay & 7 & 76.0 \% \\	[1ex]
  \hline  
\end{tabular}
\vspace{1mm}
\caption{Classification accuracy for training sets generated by different models}
\label{table:1}
\endminipage
\centering
\minipage{0.49\textwidth}
  \includegraphics[width=\linewidth]{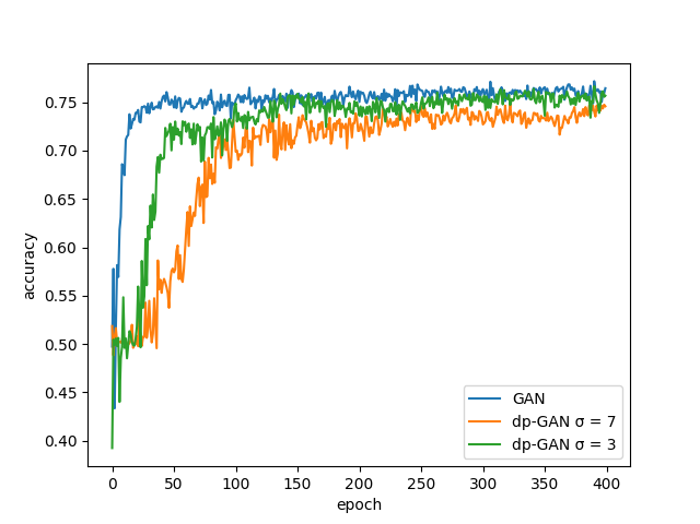}
  \captionof{figure}{classification accuracy Average for $5$ runs for different noise values}
  \label{fig:awesome_imag}
\endminipage\hfill
\end{table}

\begin{figure}[!htb]
\minipage{0.49\textwidth}
  \includegraphics[width=\linewidth]{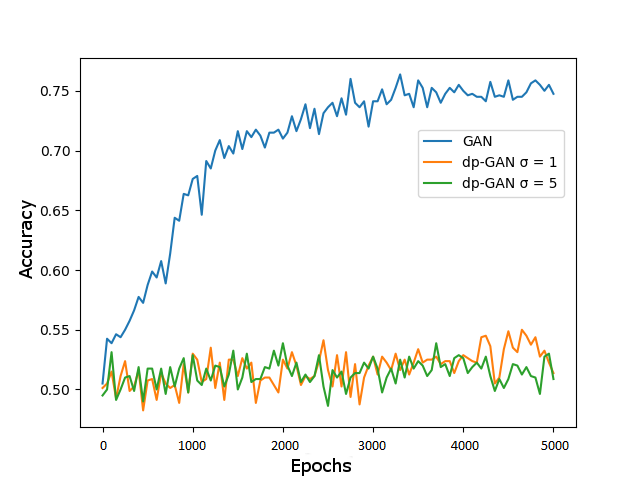}
  \caption{Membership inference attack accuracy for non-anonymized GAN, dp-GAN with $\sigma = 1$ and $\sigma = 5$ }\label{fig:awesome_image9}
\endminipage\hfill
\minipage{0.49\textwidth}
  \includegraphics[width=\linewidth]{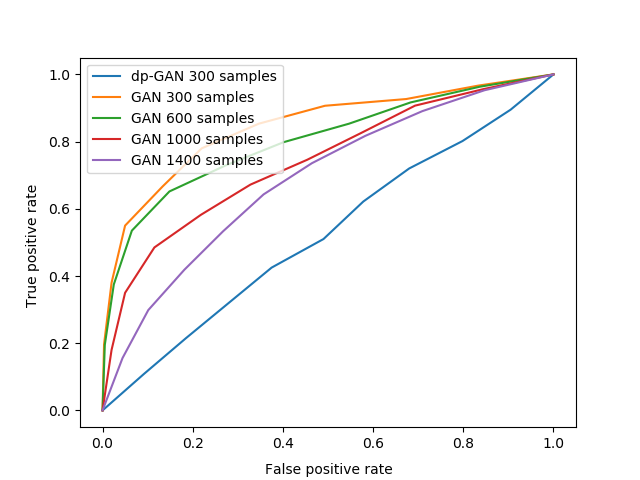}
  \caption{ROC curves for membership inference attacks for GAN and dp-GAN using generated samples of different size}\label{fig:roccurves}
\endminipage
\end{figure}
Table~\ref{table:1} evaluates the degradation of the performances when differential privacy is adopted. The target accuracy of $77.2\%$ was reached by using the real training set to train the random forest classifier. As expected, a non-anonymized GAN is able to produce high quality data: its accuracy loss is very low, $0.5\%$, compared to the target. Interestingly, even when we adopted the dp-GAN framework the accuracy remained high. Using $\epsilon = 7$ and  clipping decay, we obtained results similar to the ones without anonymization. In addition,  Table. \ref{table:1} points out the positive effect of clipping decay, that it is able to increase the accuracy of about $1\%$.

Fig.~\ref{fig:awesome_imag} highlights the effects on the classification accuracy when dp-GAN is adopted using different amounts of noise. The main effect is to slow down the training process, but not to significantly impact accuracy. Indeed, the added noise requires more epochs to reach convergence, which is  amplified by $\sigma$. In most of the use cases, this is a minor drawback, considering the classification accuracy. Moreover, the dp-GAN is trained once;   then the generator can be released to produce samples on demand. 

Fig.~\ref{fig:awesome_image9} shows the analysis of the accuracy of membership inference attacks on the model using different training procedures with different levels of privacy guarantees. This analysis has been done at different epochs of the training process. The accuracy of the model increases over time, however this makes the model more subject to membership inference attacks. As can be seen from  Fig.~\ref{fig:roccurves}, training the model with no anonymization rapidly increases the accuracy of attacks: this highlights the problems that still afflict many generative models that cannot effectively generalize the training data. On the contrary, by increasing the privacy level, the accuracy of the attacks tends to remain close to $50\%$. This is obtained at the costs of losing about $1\%$ of accuracy during the final classification.


The size of the dataset is another important factor that influences significantly the results, since GANs need a good amount of data to generalize effectively. Fig.~\ref{fig:roccurves} confirms the results obtained in \cite{Shokri2017}, but at the same time it shows how
differential privacy works well even when the dataset is small. Indeed, the dp-GAN provides random accuracy towards membership inference attacks
independently from the size of the dataset. It is interesting to notice that since the dataset is small, the level of privacy $\epsilon$ is big compared to what it is commonly used; however, the effects of differential privacy can be still perceived clearly.

\subsection{Clipping decay}
In order to test our model on a fully discrete scenario in which analyse the effects of clipping decay on differential privacy more easily we employed the UCI mushrooms dataset.
 This dataset is composed by 22 categorical features such as odor or habitat and can be used in a classification task. The classification task consists of classifying each mushroom to one of two different categories: edible or
 poisonous. Given the dataset, we split it in training set and test set and we anonymized it computing the accuracy using different levels of privacy as already done for the UCI adult dataset. Fig.8 shows the effects of differential 
 privacy on the classification. It is clear that also the model using differential privacy with clipping decay can reach comparable results as the non-anonymized GAN; however it requires more time to converge due to the 
 addition of noise. On the contrary, the dp-GAN without clipping decay is able to reach similar but not equal results. Indeed, clipping decay slows down the training at the beginning, but in the long term it is able to obtain
 better results. The two dp-GAN use the same amount of privacy (because they use the same $\sigma = 7$) and the only difference is the capacity of the noise to adapt to the different stages of the training.
\begin{figure}[!htb]
  \includegraphics[width=0.60\linewidth]{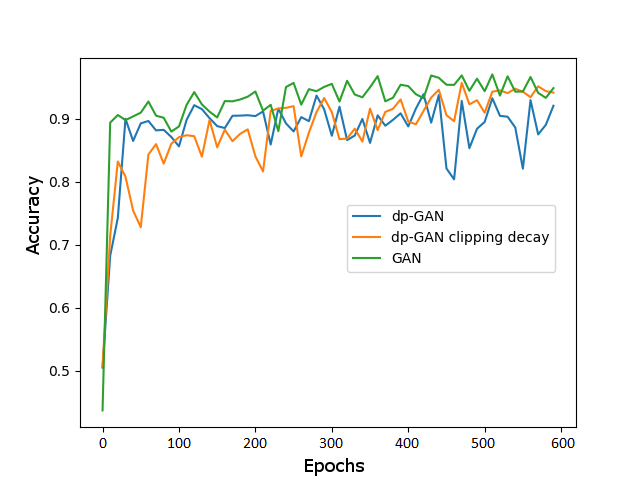}
  \caption{Comparison of the accuracies measured using different levels of privacy on the UCI mushrooms dataset.}\label{fig:awesome_image10}
\end{figure}

\section{Related work}
\label{sec:related}

\textbf{Differential privacy on machine learning models}. In \cite{Papernot2016} it is proposed an innovative approach for training a machine learning model in a differentially private manner. On the contrary of the dp-SGD, they proved that it is possible to reach differential privacy by transferring knowledge from some models to others in a noisy way. A set of models, called teachers, are trained on the real dataset and a student model learns in a private way what the teachers have grasped during the training. However, it is still unclear how this implementation can be extended to a non-interactive setting. \cite{Xie2018} developed a dp-GAN based on the dp-SGD providing some optimizations in order to improve performances focusing on the generation of images. In contrast, our work highlights that dp-GAN can be adapted to a variety of different use cases and in particular we developed a variation dedicated to discrete data. In addition, we provide, also, an overview of the effects that differential privacy has on membership inference attacks; \cite{Shokri2017} pointed out how severe the risk of this kind of attack in a general machine learning model can be. We have confirmed the issue while highlighting that the noise introduced by differential privacy reduces overfitting and consequently the accuracy of membership inference attacks.
\newline \textbf{Generative adversarial networks on discrete data}. \cite{Yu2016} developed SeqGAN, an approach dedicated to the generation of sequences of discrete data. SeqGAN is based on a reinforcement learning training where the reward signal is produced by the discriminator. However, it is not clear how this approach can be extended to include differential privacy. On the contrary, \cite{Mottini2018AirlinePN} uses Cramer GANs to combine discrete and continuous data. These recent works did not address data privacy concerns.
\newline \textbf{Differential privacy without deep learning}.
Interesting results have been also obtained through other types of generative models. In the context of non-interactive anonymization,
\cite{Zhang2017} developed a differentially private method for releasing high-dimensional data through the usage of Bayesian networks. This kind of network is able to learn the correlations present in the dataset and generate new samples. In particular, the distribution of the dataset is synthetized through a set of low dimensional marginals where noise is injected and from which it is possible to generate the new samples.
However, the approach suffers from an extremely high complexity, thus being unpractical to anonymize large datasets.
We have also analysed the literature to prove that the amount of privacy that dp-GAN guarantees is comparable to the one of the other most common implementations. Although there is no specific value for which $\epsilon$ is considered safe, we obtained most often lower privacy costs compared to ~\cite{erlingsson2014rappor,Apple2017}, which are the two most relevant works dealing with real-life datasets.  Similar privacy costs have been 
used in the most recent literature in the differential privacy field \cite{Abadi2016,Papernot2016}.

\section{Conclusion}
\label{sec:final}
In this paper, we apply a novel approach in the release of anonymized data. The privacy guarantee that we used, differential privacy, is the state of the art in this field and it provides adequate protection from very sophisticated attacks, such as linkage attacks; this characteristic cannot be guaranteed in other common anonymization methods such as k-anonymity. 
We propose an extension to the previous dp-GAN implementations that targets discrete data and we introduce a major contribution, clipping decay, to optimize performances.
We assessed that this approach can be adopted in highly diverse real case scenarios while maintaining a good utility but at the same time preserving privacy especially when treating large datasets. We have also pointed out the beneficial effects that this implementation provide against membership inference attacks. The primary goal of this paper is to provide an effective approach towards releasing new open data. Indeed, although open data constitute an important value for many companies and governments, it is often difficult to produce new, high-quality datasets. In the future we will continue to work towards the reduction of the amount of privacy used through an additional set of optimizations and we are going to assess what kind of benefits can transfer learning provide to data anonymization.

\bibliographystyle{splncs04}

\bibliography{biblio}

\end{document}